\newcommand{\Integ}{{\mathbb Z}}
\newcommand{\Real}{{\mathbb R}}
\newcommand{\NP}{{\textrm NP}}
\newcommand{\M}{\mathcal{M}}
\newcommand{\I}{\mathcal{I}}
\newtheorem{theorem}{Theorem}[section]
\newtheorem{corollary}[theorem]{Corollary}
\newtheorem{lemma}[theorem]{Lemma}
\theoremstyle{definition}
\newtheorem{definition}[theorem]{Definition}
\newtheorem{alg}{Algorithm}
\begin{document}

\title{A simple PTAS for Weighted Matroid Matching on Strongly Base Orderable Matroids\footnote{This work was partially supported by NSF
  contract CCF-0829878 and by ONR grant N00014-11-1-0053.}\, \footnote{Extended abstract of this article to appear in the proceedings of the Sixth Latin-American Algorithms, Graphs and Optimization Symposium (LAGOS) 2011.}}
\author{Jos\'e A. Soto\thanks{MIT,
Dept.~of Math., Cambridge, MA 02139. \texttt{jsoto@math.mit.edu}.}}
\date{}
\maketitle

\begin{abstract}
We give a simple polynomial time approximation scheme for the weighted matroid matching problem on strongly base orderable matroids. We also show that even the unweighted version of this problem is NP-complete and not in oracle-coNP.
\end{abstract}

\section{Preliminaries}
We assume familiarity with basic graph and matroid theory. For reference, see Schrijver's book~\cite{Schrijver-book}. Given a matroid $\M=(S,\I)$ and a graph $G=(S,E)$, we say that a matching $F \subseteq E$ is  \emph{feasible} for $\M$ if the set $\bigcup F \subseteq S$ of vertices covered by $F$ is independent in $\M$. Given a weight function, $w\colon E \to \Real^+$, the \emph{weighted matroid matching problem} consists on finding a feasible matching of maximum weight. This problem is a common generalization to both the weighted graphic matching and the weighted matroid intersection problems.

Even for the unweighted version (where every edge has unit weight), the previous problem is not polynomially solvable when the matroid is given by an independence oracle~\cite{JensenK82,Lovasz81}: there are instances where checking that no feasible matching of a given size exists requires an exponential number of oracle queries. These instances can be modified to show \NP-completeness~\cite{Schrijver-book}.

For the unweighted version, Lov\'asz~\cite{Lovasz81} has given a polynomial time algorithm for linear matroids. Recently, Lee et al.~\cite{LeeSV10} have developed a PTAS for general matroids using local search techniques. For arbitrary weights, the situation is less developed. We only have polynomial time algorithms for the case where the matroid is a gammoid~\cite{tong1982solving}. An open problem of Lee et al.~\cite{LeeSV10} is to get a PTAS for the weighted version on general matroids.

In this note we focus on a special class of matroids containing gammoids and not comparable with linear matroids. A matroid $\M=(S,\I)$ is \emph{strongly base orderable} if for every pair of bases $I$~and~$J$, there is a bijection $\pi\colon I \to J$ such that for all $K \subseteq I$, the set $\pi(K) \cup (I\setminus K)$ is also a base of the matroid. On the positive side, we give a local search based PTAS for the weighted matroid matching problem on strongly base orderable matroids. On the negative side, we show that even the unweighted version of this problem is NP-complete and not in oracle co-NP. The negative results follow since the matroids used to prove the analogue hardness results for matroid matching are strongly base orderable.

An important special case of the weighted matroid matching problem is the \emph{weighted matroid parity problem}. In this version, the set $E$ of edges (also denoted as \emph{pairs} or \emph{mates}) is itself a matching. The following reduction~\cite{LeeSV10} shows that both problems are equivalent. Given an instance $(\M,G,w)$ of the weighted matroid matching problem, define a new set $S'$ of elements containing as many copies of every vertex $s$ as its degree in $G=(S,E)$. Construct a new graph $G'=(S',E')$ by creating for every edge $\{u,v\} \in E$, one edge of the same weight between a copy of $u$ and a copy of $v$ in $S'$, in such a way that the new collection of edges $E'$ is a matching. Let $\M'=(S',\I')$ be the matroid in which a set $I' \subseteq S'$ is independent if it has at most one copy of each element of $S$ and the respective set $I$ of original elements is independent in~$\M$. It is easy to check that the previous construction gives an approximation preserving reduction and that if $\M$ is strongly base orderable then so is the matroid $\M'$. The last statement follows since strong base orderability of matroids is closed under taking parallel extensions \cite[Section 42.6c]{Schrijver-book}.

\section{A local search PTAS for Weighted Matroid Parity}
Consider a matroid $\M=(S,\I)$, a partition $E$ of $S$ into $n$ pairs and a weight function $w\colon E\to \Real^+$. A folklore result (see, e.g.~\cite{LeeSV10}) is that the \emph{greedy solution} obtained by processing the pairs in decreasing order of weight, and selecting only those pairs that keep the current solution feasible, is a 2-approximation for the weighted matroid parity problem.

\begin{definition} An \emph{$s$-move} for a feasible set of pairs $A$ is a choice of pairs $\overline{A} \subseteq A$ and $\overline{B} \subseteq E\setminus A$, with $|\overline{A}|+|\overline{B}|\leq s$, such that  $D=A \Delta (\overline{A} \cup \overline{B})$ is feasible. The value $w(\overline{B})-w(\overline{A})$ is the \emph{gain of the move}. The move \emph{improves $A$} if its gain is positive (equiv., if $w(D) > w(A)$). For a given $\delta>0$, the move \emph{$\delta$-improves $A$} if its gain is at least $\delta w(A)$ (equiv., if $w(D) > (1+\delta)w(A)$).
\end{definition}

A feasible set is a \emph{local optimum for $s$-moves} if no $s$-move improves it. A feasible set is a \emph{$\delta$-local optimum for $s$-moves} if no $s$-move $\delta$-improves it. Consider the following local search algorithms for the unweighted and weighted matroid parity problems respectively, parameterized by an integer $s\geq 1$.

\begin{alg} \label{alg1} While there is an $s$-move improving the current solution, initially empty, perform the $s$-move of maximum gain. Return the solution found.
\end{alg}

\begin{alg} \label{alg2} Start from a greedy solution $A_0$ for the problem. While there is an $s$-move $1/n^2$-improving the current solution, perform the $s$-move with maximum gain. Return the solution found.
\end{alg}
\begin{lemma}\label{lem:localoptimum}
Given $s \geq 1$, Algorithm \ref{alg1} (resp.\ Algorithm \ref{alg2}) returns a local optimum (resp.\ a $1/n^2$-local optimum) for $s$-moves for the unweighted (resp.\ weighted) matroid parity problem in $O(n^{s+1})$ time (resp.\ $O(n^{s+2})$ time).
\end{lemma}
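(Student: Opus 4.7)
The plan is to separate two ingredients for each algorithm: a bound on the cost of one iteration, and a bound on the number of iterations.

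For the per-iteration cost I would enumerate all $s$-moves. Each such move is determined by the set $C=\overline{A}\cup\overline{B}\subseteq E$ of at most $s$ pairs it touches (with $\overline{A}=C\cap A$ and $\overline{B}=C\setminus A$), and there are $\sum_{i=0}^{s}\binom{n}{i}=O(n^s)$ such subsets. For each candidate I would build $D=A\Delta C$, test feasibility by one independence-oracle query on $\bigcup D$, and evaluate the gain $w(\overline{B})-w(\overline{A})$. A single scan thus locates the maximum-gain $s$-move in $O(n^s)$ time.

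For Algorithm~\ref{alg1}, since all weights equal one, every improving move satisfies $|\overline{B}|\geq |\overline{A}|+1$, hence $|D|=|A|+|\overline{B}|-|\overline{A}|\geq |A|+1$. Because any feasible pair-set has size at most $n$, at most $n$ iterations occur, giving total running time $O(n^{s+1})$; termination by definition means that no $s$-move improves the current solution, i.e.\ the output is a local optimum.

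For Algorithm~\ref{alg2}, each $1/n^2$-improving $s$-move multiplies the solution weight by at least $1+1/n^2$, so after $k$ iterations $w(A_k)\geq(1+1/n^2)^k w(A_0)$. I would then invoke the 2-approximation guarantee of the greedy warm-start, which gives $w(A_0)\geq w(\mathrm{OPT})/2$, together with $w(A_k)\leq w(\mathrm{OPT})$, to conclude $(1+1/n^2)^k\leq 2$ and therefore $k\leq \ln 2/\ln(1+1/n^2)=O(n^2)$. Combined with the $O(n^s)$ per-iteration cost this yields $O(n^{s+2})$ overall; termination corresponds to a $1/n^2$-local optimum by construction. The one delicate point, and the step I would scrutinise, is this iteration bound: without the greedy warm-start $w(A_0)$ could be arbitrarily smaller than $w(\mathrm{OPT})$ and the geometric-progression argument would fail, so the 2-approximation from the preceding paragraph is doing essential work here.
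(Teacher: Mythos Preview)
Your argument is correct and mirrors the paper's proof essentially line for line: an $O(n^s)$ enumeration per iteration, at most $n$ iterations for Algorithm~\ref{alg1} since cardinality strictly increases, and $O(n^2)$ iterations for Algorithm~\ref{alg2} via the geometric growth $(1+1/n^2)^k\le 2$ enabled by the greedy $2$-approximation. The only item you omit is the $O(n\log n)$ cost of computing the greedy start, which is clearly absorbed into $O(n^{s+2})$.
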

\begin{proof}
Clearly, both algorithms are finite and correct and the greedy solution~$A_0$ can be found in $O(n
\log n)$ time, which is $O(n^{s+2})$. Given a feasible solution $A$, the number of possible $s$-moves for~$A$ is $O(n^{s})$. Then each iteration of both algorithms can be done in $O(n^{s})$ time and using the same number of oracle calls. To conclude, we show that the number of iterations is bounded.

Since every iteration of the first algorithm improves the current solution by one unit and the optimum solution has cardinality at most $n$, this algorithm performs at most $n$ iterations. In the second algorithm, every iteration improves the weight of the current solution by a factor $(1+1/n^2)$. Since the initial solution $A_0$ and the solution returned are at most a factor two apart, the number of iterations is at most $\log_{(1+1/n^2)}(2) \leq (1+n^2)\log(2)=O(n^2)$. \qedhere
\end{proof}

The following lemma, which we prove at the end of this section, is useful to obtain a PTAS when $\M$ is strongly base orderable.

\begin{lemma}\label{lem:main}
Let $t\leq n$ be a positive integer and $B$ be an optimal feasible solution of the weighted matroid parity problem on an strongly base orderable matroid.
If $A$ is a local optimum for $(2t+1)$-moves then $w(A) \geq \left(1-1/(t+1)\right)w(B)$.
If $A$ is a $1/n^2$-local optimum for $(2t+1)$-moves then $w(A) \geq \left(1-2/(t+2)\right)w(B)$.
\end{lemma}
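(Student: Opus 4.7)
The plan is to use strong base orderability to get an elementwise bijection between $\bigcup A$ and $\bigcup B$, lift it to a pair-level exchange structure via an auxiliary bipartite multigraph, and then sum local-optimality inequalities over carefully chosen segments of the resulting cycle-and-path decomposition.

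First I would invoke strong base orderability on the independent sets $\bigcup A$ and $\bigcup B$ (equalizing cardinalities by augmenting the smaller one inside $\bigcup A \cup \bigcup B$ if necessary) to obtain a bijection $\pi\colon \bigcup A \to \bigcup B$ that fixes $\bigcup(A\cap B)$ and satisfies $(\bigcup A\setminus K)\cup \pi(K)\in \I$ for every $K\subseteq \bigcup A$. Using $\pi$, I would build a bipartite multigraph $H$ on vertex set $(A\setminus B)\cup(B\setminus A)$ in which each element $a\in\bigcup(A\setminus B)$ contributes an edge from its pair to the $B$-pair containing $\pi(a)$. Every $A$-vertex has degree exactly $2$ and every $B$-vertex has degree at most $2$, so $H$ is a disjoint union of even cycles together with paths whose endpoints are $B$-pairs of deficient degree (corresponding to the cardinality gap between $\bigcup A$ and $\bigcup B$).

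Next I would identify two families of valid $(2t+1)$-swaps in each component of $H$: (i) a \emph{cyclic segment} that removes $t+1$ consecutive $A$-pairs from a long cycle and adds the $t$ interior $B$-pairs, whose feasibility follows since the resulting element set is a subset of the SBO exchange $(\bigcup A\setminus K)\cup\pi(K)$ with $K$ the union of the removed $A$-pairs; and (ii) the \emph{full swap} of any component whose total pair count is at most $2t+1$ (with an analogous argument for endpoint segments on paths, where the deficient $B$-element can be added because it augments the exchanged independent set). Applying the local-optimality hypothesis on each such segment yields $w(\overline{B}) \leq w(\overline{A})$ in the strict case and $w(\overline{B})-w(\overline{A}) < w(A)/n^2$ in the $1/n^2$-local case.

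Finally I would sum these inequalities over the $\ell$ cyclic shifts of each long cycle: each $A$-pair appears in $t+1$ segments and each interior $B$-pair appears in $t$ segments, giving $t\cdot w(B\cap\text{comp}) \leq (t+1)\cdot w(A\cap\text{comp}) + \text{slack}$, with short components contributing the stronger $w(B\cap\text{comp}) \leq w(A\cap\text{comp})$. Totaling over all components (which involve at most $n$ pairs, hence at most $n$ shifts in total) and adding $w(A\cap B)$ to both sides yields $w(A) \geq \frac{t}{t+1}w(B) = (1-1/(t+1))w(B)$ in the strict case. In the $\delta=1/n^2$ case the total accumulated slack is at most $n\cdot w(A)/n^2 = w(A)/n$, so a short calculation converts the aggregate bound into $w(A) \geq \frac{tn}{(t+1)n+1}w(B) \geq (1-2/(t+2))w(B)$ for the relevant range of $t\leq n$. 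The main obstacle I anticipate is the clean handling of the path components that arise from $|\bigcup A|\neq|\bigcup B|$: the endpoint $B$-pair contributes one element outside the image of $\pi$, and justifying its inclusion in the swap requires carefully combining the SBO exchange with the augmentation of $\bigcup A$ by elements of $\bigcup B$.
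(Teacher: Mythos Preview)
Your approach is essentially the paper's: use the SBO bijection to build a bipartite exchange multigraph on the pairs of $A\triangle B$, decompose it, and sum local-optimality inequalities over full swaps of short components and over the cyclic family of $(t{+}1,t)$-segments along each long cycle. The one place where the paper differs—and which cleanly dissolves the obstacle you flag—is the treatment of $|A|\ne|B|$. Rather than augmenting the smaller independent set inside $\bigcup A\cup\bigcup B$ and then coping with degree-deficient path endpoints, the paper pads the smaller of $A,B$ with \emph{dummy zero-weight pairs} added as coloops and then truncates. Both $\bigcup A'$ and $\bigcup B'$ are now bases of a strongly base orderable matroid $\M'$, every vertex of the exchange multigraph has degree exactly two, and the graph is a disjoint union of even cycles with no paths at all. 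Since the dummies carry zero weight they vanish from every gain computation, and dropping them converts each $\M'$-swap back into a genuine $(2t{+}1)$-move for $A$ in $\M$. This device costs nothing and eliminates your endpoint worry entirely.

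One small bookkeeping caution in the $1/n^2$ case: to fold the short-cycle inequality $w(C\cap B)-w(C\cap A)\le w(A)/n^2$ into the uniform shape $t\,w(C\cap B)-(t{+}1)\,w(C\cap A)\le\text{slack}$, you must multiply by $t$ (and use $w(C\cap A)\ge 0$), so each short cycle contributes slack $t\,w(A)/n^2$, not $w(A)/n^2$. The paper accordingly bounds the total slack by $\bigl(t\,|\mathcal C^-_t|+|V^+\cap A'|\bigr)w(A)/n^2\le w(A)$, yielding $(t{+}2)\,w(A)\ge t\,w(B)$ directly. Your alternative of dividing the long-cycle aggregate by $t$ before summing also works and even gives a slightly sharper intermediate constant, but make sure the slack you carry matches whichever normalization you pick; as written, ``at most $n$ shifts'' undercounts the effective slack if you are aggregating in the $t$-scaled form.
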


Given $0< \varepsilon < 1$, set $t(\varepsilon)=\lceil 1/\varepsilon\rceil$ and $t'(\varepsilon)=2t(\varepsilon)$. By Lemma~\ref{lem:main}, any local optimum for $(2t(\varepsilon)+1)$-moves is a $(1 + \varepsilon)$-approximation, and so is any $1/n^2$-local optimum for $(2t'(\varepsilon)+1)$-moves. By choosing $s$ appropriately in Lemma \ref{lem:localoptimum}, we have the following corollary.

\begin{corollary} Algorithm \ref{alg1} (resp.\ Algorithm \ref{alg2}) returns a $(1+\varepsilon)$-approximation for the unweighted (resp.\ weighted) matroid parity problem for strongly base orderable matroids in time $O(n^{2\lceil1/\varepsilon\rceil+2})$ (resp.\ $O(n^{4\lceil 1/\varepsilon\rceil+3})$).
\end{corollary}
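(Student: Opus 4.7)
The plan is to prove the corollary by instantiating the parameter~$s$ in Lemma~\ref{lem:localoptimum} with the specific values identified by Lemma~\ref{lem:main}, and then verifying two things: (i)~the resulting solution is a $(1+\varepsilon)$-approximation, and (ii)~the running time matches the stated bound. No new ideas are needed beyond combining the two preceding lemmas.

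For Algorithm~\ref{alg1}, I would run it with parameter $s = 2t(\varepsilon)+1 = 2\lceil 1/\varepsilon\rceil + 1$. By Lemma~\ref{lem:localoptimum} the algorithm terminates with a local optimum $A$ for $s$-moves in time $O(n^{s+1}) = O(n^{2\lceil 1/\varepsilon\rceil + 2})$. By Lemma~\ref{lem:main} applied with $t = t(\varepsilon)$, the output satisfies $w(A) \geq (1 - 1/(t+1))\, w(B)$, and hence $w(B)/w(A) \leq (t+1)/t = 1 + 1/t \leq 1+\varepsilon$, using $t \geq 1/\varepsilon$. This gives the unweighted statement.

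For Algorithm~\ref{alg2}, I would run it with $s = 2t'(\varepsilon)+1 = 4\lceil 1/\varepsilon\rceil + 1$. Lemma~\ref{lem:localoptimum} then guarantees a $1/n^2$-local optimum $A$ for $s$-moves in time $O(n^{s+2}) = O(n^{4\lceil 1/\varepsilon\rceil + 3})$. Lemma~\ref{lem:main} applied with $t = t'(\varepsilon) = 2\lceil 1/\varepsilon\rceil$ yields $w(A) \geq (1 - 2/(t'+2))\, w(B)$, so $w(B)/w(A) \leq (t'+2)/t' = 1 + 2/t' \leq 1+\varepsilon$, using $t' \geq 2/\varepsilon$. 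This proves the weighted statement.

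There is no real obstacle here — the content is purely a bookkeeping exercise that packages Lemmas~\ref{lem:localoptimum} and~\ref{lem:main} into an $\varepsilon$-parameterized guarantee. The only point worth being careful about is the translation between the structural bounds $w(A) \geq (1-1/(t+1))\,w(B)$ and $w(A) \geq (1-2/(t+2))\,w(B)$ and the standard ``$(1+\varepsilon)$-approximation'' phrasing; this is what dictates the choice $t'(\varepsilon) = 2t(\varepsilon)$ in the weighted case, since the slightly weaker guarantee of Lemma~\ref{lem:main} for $1/n^2$-local optima requires twice the parameter to absorb the extra factor of~$2$ in the numerator.
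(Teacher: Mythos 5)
Your proposal is correct and matches the paper's own argument exactly: the paper likewise sets $t(\varepsilon)=\lceil 1/\varepsilon\rceil$ and $t'(\varepsilon)=2t(\varepsilon)$, invokes Lemma~\ref{lem:main} to see that the corresponding local optima are $(1+\varepsilon)$-approximations, and plugs $s=2t(\varepsilon)+1$ (resp.\ $s=2t'(\varepsilon)+1$) into the running-time bounds of Lemma~\ref{lem:localoptimum}. Your remark about why the weighted case needs $t'=2t$ to absorb the factor $2$ in the numerator is precisely the point the paper's choice of parameters encodes.
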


Using the reduction from matroid matching to matroid parity we obtain our main positive result.
\begin{theorem} There is a PTAS for both the weighted and the unweighted matroid matching problems on strongly base orderable matroids.
\end{theorem}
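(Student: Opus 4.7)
The plan is to combine the reduction from weighted matroid matching to weighted matroid parity (described at the end of Section~1) with the Corollary giving a PTAS for weighted matroid parity on strongly base orderable matroids. Given $\varepsilon>0$ and an instance $(\M,G,w)$ of weighted matroid matching where $\M=(S,\I)$ is strongly base orderable, I would first build the parallel-extension instance $(\M',G',w)$ with ground set $S'$ of size $N \leq 2|E|$, so the input size is preserved up to a polynomial factor.

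Next I would invoke the two facts already established in the preliminaries: (i) strong base orderability is closed under parallel extensions \cite[Section 42.6c]{Schrijver-book}, so $\M'$ is strongly base orderable; and (ii) the reduction is approximation preserving, meaning a $(1+\varepsilon)$-approximation to the parity instance translates to a $(1+\varepsilon)$-approximation to the original matching instance (the correspondence between feasible matchings in $G$ and feasible matchings in $G'$ preserves weights exactly). Then I would apply Algorithm~\ref{alg2} (or Algorithm~\ref{alg1} in the unweighted case) to $(\M',G',w)$ with the parameter $s$ prescribed by the Corollary, obtaining a $(1+\varepsilon)$-approximate feasible matching for the parity instance in time polynomial in $N$, hence polynomial in $|E|$.

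Finally, I would translate the output back, replacing each selected edge of $G'$ by the corresponding edge of $G$ it was created from. This yields a feasible matching in $(\M,G,w)$ of weight equal to the parity solution, and thus within a $(1+\varepsilon)$ factor of the optimum. Since $\varepsilon$ is arbitrary, this gives a PTAS.

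I do not expect any real obstacle: both ingredients (the reduction and the parity PTAS) are already in place, and the argument is essentially bookkeeping. The only minor point to verify is that the ``approximation preserving'' claim about the reduction is tight in both directions (optimum of parity instance equals optimum of matching instance), which is immediate from how edges and their copies are paired up in $E'$.
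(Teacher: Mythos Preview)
Your proposal is correct and follows exactly the approach the paper takes: the paper does not even spell out a proof for this theorem, merely stating that it follows from the approximation-preserving reduction to matroid parity (end of Section~1) together with the Corollary giving a PTAS for parity on strongly base orderable matroids. Your write-up is a faithful and slightly more detailed version of that one-line justification.
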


\begin{proof}[Proof of Lemma \ref{lem:main}]
For two feasible solutions $A$ and $B$, consider the collections $A'$ and $B'$ obtained by adding dummy pairs of zero weight to the set of smaller cardinality, so that $|A'|=|B'|\leq n$. Let $X$ be the set of elements added and $\M'=(S\cup X,\I')$ be the matroid obtained from $\M=(S,\I)$ by first adding $X$ as coloops and then truncating $\M'$ to rank $2|A'|=|\bigcup A'|$. By properties of strongly base orderable matroids (see, e.g.~\cite[Section 42.6c]{Schrijver-book}), $\M'$ is also strongly base orderable and both $I=\bigcup A'$ and $J=\bigcup B'$ are bases of $\M'$.

Consider a bijection $\pi\colon I \to J$ such that for all $K \subseteq I$, $\pi(K) \cup (I\setminus K)$ is a base of $\M'$. We can further impose $\pi|_{I \cap J}$ to be the identity map (see, e.g.~\cite[Section 42.6c]{Schrijver-book}). Let $H$ be the multigraph on the set $A' \cup B'$ having one edge between $a=\{u,v\} \in A'\setminus B'$ and $b=\{u',v'\} \in B'\setminus A'$ if only one element in $\{u,v\}$ is mapped to $\{u',v'\}$ and two parallel edges between them if $\pi(\{u,v\})=\{u',v'\}$. Since every vertex has degree two and the graph is bipartite, $H$ is a collection of vertex-disjoint even cycles. Orient the edges so that every cycle is properly directed to obtain a digraph~$\vec{H}$. See Figure \ref{fig:figure1}.

\begin{figure}[h!!]
\centering
\input{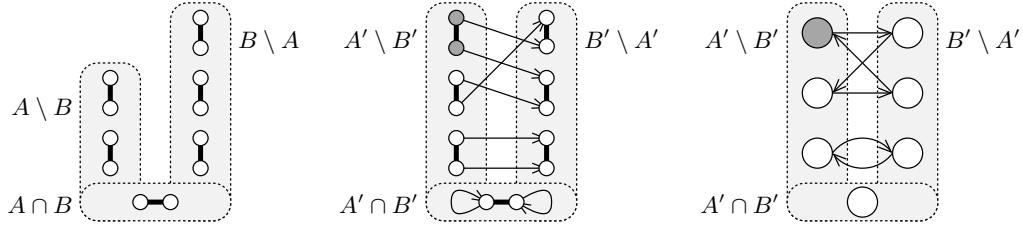}
\caption{In the left, feasible sets $A$ and $B$ sharing one pair. In the middle, sets $A'$ and $B'$ and the bijection $\pi\colon \bigcup A'\to \bigcup B'$. In the right, a possible orientation of $H$.}
\label{fig:figure1}
\end{figure}

For a positive integer $t$, let $\mathcal{C}^+_t$ be the collection of \emph{long cycles} in $\vec{H}$, having length at least $2t+2$ and $\mathcal{C}^-_t$ be the collection of \emph{short cycles} in $\vec{H}$, having length at most $2t$ (for convenience, include in $\mathcal{C}^-_t$ all the isolated nodes in $A'\cap B'$ as \emph{zero-length cycles}). Let $V^+$ and $V^-$ be the sets of nodes in long and short cycles respectively. Note that $V^+$ and $V^-$ form a partition of $A' \cup B'$.

For every cycle $C$ of $\vec{H}$, $\pi(\bigcup (C \cap A')) = \bigcup (C \cap B')$. Hence $A'\Delta C$ is a feasible matching for~$\M'$. Therefore, for every short cycle $C\in \mathcal{C}^-_t$, the collection of pairs $C\setminus X$ defines a $2t$-move for $A$ in the original matroid $\M$.

For every $a \in V^+ \cap A'$, define the set $H_a$ of nodes reachable from $a$ by using at most $2t$ directed arcs in $\vec{H}$. This is, $H_a$ is the node set of the directed path starting in $a$, having $t+1$ nodes from~$A'$ and having $t$ nodes from $B'$. By construction, for every $a\in V^+ \cap A'$, the image under $\pi$ of $\bigcup (H_a \cap A')$ contains $\bigcup (H_a \cap B')$. This observation, together with the definition of $\pi$, implies that $A' \Delta H_a$ is feasible for the matroid $\M'$. Therefore, $H_a \setminus X$ defines a $(2t+1)$-move for $A$ in $\M$.

The total gain of moves coming from short cycles, $\{C\setminus X\colon C \in C^-_t\}$, is
\begin{equation*}
  \sum_{C \in \mathcal{C}^-_t}\![w(C\cap B') - w(C\cap A')]
\end{equation*}
and the total gain of moves from long cycles, $\{H_a\setminus X\colon a \in V^+ \cap A'\}$, is
\begin{equation*}
  \sum_{a \in V^+ \cap A'}[w(H_a \cap B') - w(H_a \cap A')] = \sum_{C \in \mathcal{C}^+_t} [t w(C\cap B') - (t+1)w(C\cap A')].
\end{equation*}

Suppose that $B$ is optimal and $A$ is a local optimum for $(2t+1)$-moves. Then each of the above moves has nonpositive gain. In particular, for each $C \in \mathcal{C}^-_t$, we have $0\geq w(C\cap B') - w(C\cap A') \geq t w(C\cap B') - (t+1)w(C\cap A')$. Therefore,
\begin{align*}
0 &\geq \sum_{C \in \mathcal{C}^-_t\cup \mathcal{C}^+_t}\!\![tw(C\cap B') - (t+1)w(C\cap A')]\\
&= tw(B') -(t+1)w(A') = tw(B)-(t+1)w(A).\end{align*}
From here, we have that $w(A) \geq tw(B)/(t+1)=(1-1/(t+1))w(B)$.

If $A$ is a $1/n^2$-local optimum for $(2t+1)$-moves, then each move has gain of at most $w(A)/n^2$. Since $t\leq n$, $|\mathcal{C}^-_t|\leq |V^-\cap A'|$ and $|A'|\leq n$, we have
\begin{align*}
  w(A) &\geq \frac{w(A)}{n^2}(t|\mathcal{C}^-_t| + |V^+\cap A'|) \geq \sum_{C \in \mathcal{C}^-_t\cup \mathcal{C}^+_t}\!\![tw(C\cap B') - (t+1)w(C\cap A')]\\
&= tw(B') -(t+1)w(A') = tw(B)-(t+1)w(A).
\end{align*}
From here, we have that $w(A) \geq t w(B)/(t+2) = (1 - 2/(t+2)) w(B)$. \qedhere\end{proof}

\section{Matroid parity is hard on strongly base orderable matroids.}
Here we review the proofs of hardness of matroid parity on general matroids (see~\cite[Section 43.9]{Schrijver-book} for more details).
Let $\nu \in \Integ^+$, $S$ be a set of even size and~$E$ be a partition of $S$ into pairs. Let $\M$~be the matroid on $S$, where $T \subseteq S$ is independent if $|T|\leq 2\nu -1$, or if $|T|=2\nu$ and $T$ is not the union of $\nu$ pairs in~$E$. For every set $F \subseteq E$ of size $\nu$, let $\M_F$ be the matroid extending $\M$ by an independent set $I_F=\bigcup F\subseteq S$. Also, given a graph $G=(E,X)$, construct $\M_G$ by extending $\M$ by all independent sets $I_C = \bigcup C$, where $C$ is the vertex set of a clique of size $\nu$ in $G$. It can be shown that $\M$, $\M_F$ and $\M_G$ are matroids, that $E$ has no matching of size $\nu$ feasible for $\M$, that $F\subseteq E$ is the unique matching of size $\nu$ feasible for $\M_F$, and that $E$ contains a matching of size $\nu$ feasible for $\M_G$ if and only if $G$ has a clique of size $\nu$.

Given $\nu$, the above set $E$ and an oracle for a matroid $\mathcal{N}$ that is known to be either $\M$ or any of the possible $\M_F$ with $|F|=\nu$, it is known that the number of oracle calls needed to check if there is a matching of size $\nu$ feasible for $\mathcal{N}$ is at least $\binom{|E|}{\nu}$, showing that the matroid parity problem is not in oracle co-NP. Similarly, as the maximum-size clique problem is NP-complete, the matroid parity problem on matroids $\M_G$ constructed as above is NP-complete.

Note that $\M$ and $\M_F$ are special cases of $\M_G$, obtained from a graph having zero or one clique of cardinality $\nu$. The following lemma implies that the hardness guarantees for the matroid parity problem are maintained if we restrict to strongly base orderable matroids.

\begin{lemma}\label{lem:lemsbo}
  For every $\nu \in \Integ^+$ and every graph $G$, the matroid $\M_G$ is strongly base orderable.
\end{lemma}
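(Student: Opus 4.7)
The plan is to prove strong base orderability by constructing, for any two bases $I$ and $J$ of $\M_G$, an explicit bijection $\pi\colon I \to J$ witnessing the property. Note that $\M_G$ is a rank-$2\nu$ matroid in which a $2\nu$-subset $T \subseteq S$ is a base iff either $T$ is not a union of $\nu$ pairs from $E$, or $T$ is such a union whose pair-indices form a clique of size $\nu$ in $G$; equivalently, the only non-bases of size $2\nu$ are the non-clique unions of $\nu$ pairs.

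First I would set $\pi = \text{id}$ on $I \cap J$, and for every pair $p = \{x, y\}$ with $x \in I \setminus J$ and $y \in J \setminus I$, set $\pi(x) = y$. On the remaining elements of $I \setminus J$ (those coming from pairs lying entirely inside $I$, or from pairs with one element in $I \setminus J$ and no element in $J$), I would define $\pi$ via a cyclic-shift bijection onto the remaining elements of $J \setminus I$, chosen so that the two elements of each pair $p \subseteq I$ are mapped to elements of two distinct pairs of $J$. This cyclic construction is analogous to the cyclic directed paths used in the proof of Lemma~\ref{lem:main}.

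To verify the required property I would check that for every $K \subseteq I$, the set $T = \pi(K) \cup (I \setminus K)$ is a base of $\M_G$. The set $T$ fails to be a base only if $T$ is \emph{pair-closed} (every element of $T$ has its pair partner in $T$) and the resulting $\nu$ pairs do not form a clique. A pair-by-pair analysis shows that any pair with exactly one element in $I \cap J$ and its partner outside $I \cup J$, and any pair split across $I \setminus J$ and $J \setminus I$, automatically breaks pair-closedness of $T$ for every $K$. For the remaining pair types, the cyclic-shift bijection forces $K \cap (I \setminus J) \in \{\emptyset, I \setminus J\}$, giving $T \in \{I, J\}$, which are bases by hypothesis.

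The main obstacle I anticipate is the mixed case in which $I$ has strictly fewer than $\nu$ full pairs while $J$ is a clique base (or vice versa). In this regime, a pair-closed $T$ could combine some full pairs from $I$ with some full pairs from $J$'s clique, and such a mixed union of $\nu$ pairs need not be a clique in $G$. The technical hurdle is to arrange the bijection so that the cyclic-shift on full pairs, together with the extra constraints imposed by the half-pair elements of the Type A side, renders every such mixed $T$ either impossible or equal to $I$ or $J$. The plan here is to exploit the fact that the Type A side must contribute at least two half-pair elements, whose membership constraints on $K$ (forced by pair-closedness) can be arranged to be inconsistent with any $K$ that would otherwise yield a bad pair-closed $T$.
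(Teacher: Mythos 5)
Your overall strategy is the same as the paper's: build $\pi$ explicitly (identity on $I\cap J$, cross pairs matched to each other, a cyclic shift on what remains), observe that a set $T=\pi(K)\cup(I\setminus K)$ of size $2\nu$ can only fail to be a base if it is pair-closed, and then show every pair-closed $T$ must equal $I$ or $J$. Your preliminary observations are also correct: a pair with one element in $I\cap J$ and mate outside $I\cup J$, or a pair split between $I\setminus J$ and $J\setminus I$ with its two elements matched by $\pi$, kills pair-closedness for every $K$ outright (these are the paper's cases (i) and (iii)). So the framing is sound.

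The gap is that the obstacle you flag at the end is not a technical loose end -- it is the actual content of the proof, and your plan for it is not yet an argument. Concretely, you never specify where the ``half-pair'' elements (elements of $I\setminus J$ whose mate lies in $I\cap J$ or in $S\setminus(I\cup J)$, and likewise for $J\setminus I$) sit inside the cyclic shift, and the claim that ``the cyclic-shift bijection forces $K\cap(I\setminus J)\in\{\emptyset,I\setminus J\}$'' only holds in the pure case where both $I\setminus J$ and $J\setminus I$ consist entirely of full pairs. In the mixed cases the correct conclusion is usually not a dichotomy $T\in\{I,J\}$ but the single outcome $T=I$, and it is obtained by \emph{anchoring}: a half-pair element whose mate lies in $I\cap J$ is forced into $T$, its $\pi$-image is thereby forced out of $T$, and one must wire $\pi$ so that this forcing propagates around the entire cycle of full pairs. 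The paper does exactly this in its cases (v)--(viii): e.g.\ when $a=b>0$ it maps a forced-in element $u_{k+1}$ to the head $v_1$ of the chain of full $J$-pairs and lets the tail land on a forced-out element of $J\setminus I$; when one side has an excess of half-pair elements it first uses a parity count ($2k+a=2\ell+b$) to see which side has more full pairs and routes the cycle accordingly; and before any of this it must dispose of the configurations where both $I\setminus J$ and $J\setminus I$ carry elements with mates in $I\cap J$, or where one side carries both kinds of half-pair elements (cases (iv) and (v)), since otherwise no consistent anchoring exists. Your appeal to ``at least two half-pair elements whose constraints can be arranged to be inconsistent'' does not by itself rule out a pair-closed $T$ that mixes some full pairs of $I$ with some full pairs of $J$ while satisfying all the half-pair constraints; preventing that is precisely what the specific interleaving $\pi(u_i)=v_i'$, $\pi(u_i')=v_{i+1}$ together with the anchor accomplishes. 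To complete the proof you would need to write out this case analysis (or an equivalent one), which is where essentially all the work lies.
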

\begin{proof}
Let $\nu \in \Integ^+$ and $G=(E,X)$ be a graph. Create a set $S$ having two copies of each element in $E$, and interpret $E$ as a partition of $S$ into pairs or mates. The bases of $\M_G$ are then
\begin{equation*}\begin{split}
\mathcal{B}(\M_G) =\{I\subseteq S\colon |I|= 2\nu, \text{ and $I$ is not the union of $\nu$ pairs of $E$}\}\\
\cup\, \{I\subseteq S\colon \text{$I$ is the union of $\nu$ pairs $F$, where $F$ is a clique in $G$}\}.
\end{split}
\end{equation*}

Let $I,J \subseteq S$ be two bases in $\mathcal{B}(\M_G)$. In what follows we construct a bijection $\pi\colon I \to J$, such that for all $K \subseteq I$, $L=\pi(K) \cup (I\setminus K)$ is also a base. We divide the problem in cases. In each of them we impose conditions on $\pi$ that can easily be satisfied by some bijection. In particular, for every case, we impose $\pi|_{I \cap J}$ to be the identity map. By assuming that $L$ is the union of $\nu$~pairs, we either show that $L \in \{I,J\}$ or we get a contradiction, proving in both situations that $L$ is a base.

Under the assumption that $L$ is the union of $\nu$ pairs, the following two properties hold.
\begin{description}
\item[Property 1:] An element is in $L$ if and only if its mate is in $L$.
\item[Property 2:] An element in $I\Delta J$ is in $L$ if and only if its image or preimage under $\pi$ is not in $L$.
\end{description}

There are nine cases to consider for $I$ and $J$. For visual aid on the first five cases, see Figure \ref{fig:figure2}.

\begin{figure}[h!!t]
\centering
\scalebox{0.98}{\input{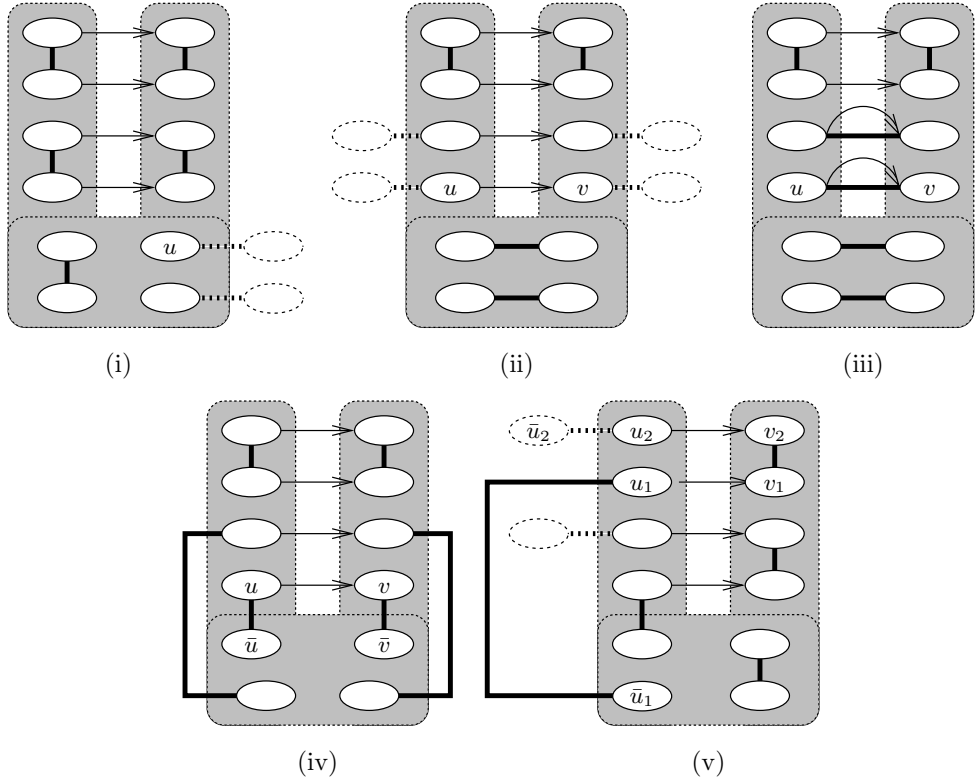}}
\caption{Cases (i)--(v). In each case, $\nu=4$, the top-left region is $I\setminus J$, the top-right region is $J\setminus I$ and the bottom region is $I\cap J$. Thick lines represent $E$.  Thin arrows represent a possible map $\pi\colon I \to J$.}
\label{fig:figure2}
\end{figure}

\begin{enumerate}
\item[(i)] If there is $u \in I\cap J$ whose mate is in $S\setminus(I\cup J)$, then any bijection $\pi$ works.
\end{enumerate}
Since $\pi(u)=u$, any set $L=\pi(K) \cup (I\setminus K)$ as above contains $u$ but not its mate. Thus, $L$ is not the union of $\nu$ pairs.
\begin{enumerate}
  \item[(ii)] Else, if there are $u\in I\setminus J$ and $v\in J\setminus I$ with mates in $S\setminus(I \cup J)$, then select any such $u$ and $v$, and set $\pi(u)=v$.
  \item[(iii)] Else, if there are $u\in I\setminus J$ and $v\in J\setminus I$ with $\{u,v\}\in E$, then select any such pair and set $\pi(u)=v$.
\end{enumerate}
Cases (ii) and (iii) are similar: any set $L=\pi(K) \cup (I\setminus K)$ contains either $u$ or $v$, but not its corresponding mate. Thus, $L$ is not the union of $\nu$ pairs.
\begin{enumerate}
\item[(iv)]\label{iv} Else, if there are pairs $\{u,\bar{u}\}$ and $\{v,\bar{v}\}$ in $E$ with $u\in I\setminus J$, $v\in J\setminus I$ and $\{\bar{u},\bar{v}\}\subseteq I\cap J$, then set $\pi(u)=v$.
\end{enumerate}
For case (iv), note that any set $L$ as above contains both $\bar{u}$ and $\bar{v}$ but only one of their mates. Thus,  $L$ is not the union of $\nu$ pairs.
\begin{enumerate}
\item[(v)] Else, if there are pairs $\{u_1,\bar{u}_1\},\{u_2,\bar{u}_2\},\{v_1,v_2\}$ in $E$ with $u_i \in I\setminus J$, $v_i \in J\setminus I$, for $i\in\{1,2\}$,  $\bar{u}_1 \in I\cap J$, and $\bar{u}_2 \in S\setminus(I\cup J)$, then set $\pi(u_i)=v_i$, for $i\in\{1,2\}$. (Also consider here the same case with $I$ and $J$ interchanged.)
\end{enumerate}
For case (v), consider a set $L$ as above and assume $L$ is the union of $\nu$ pairs. Since $\bar{u}_1$ is in $I\cap J$, we have $\pi(\bar{u}_1)=\bar{u}_1$, and so $\bar{u}_1$ must be in $L$. In what follows we apply Properties 1 and 2 iteratively. Since $\bar{u}_1$ is in $L$, its mate $u_1$ is also in $L$. Then, $\pi(u_1)=v_1$ is outside set $L$, and so is its mate $v_2=\pi(u_2)$. This implies that $u_2$ is in $L$. But its mate $\bar{u}_2$ is outside $L$, contradicting the assumption.

If none of the previous cases hold, then only one of $I\setminus J$ or $J\setminus I$ can contain an element with mate in $I\cap J$, and  only one  can contain an element with mate in $S\setminus (I \cup J)$. Furthermore, none of the two sets can contain both types of elements. The rest of the pairs intersecting $I\cup J$ are completely inside $I\setminus J$, $J\setminus I$ or $I\cap J$. Therefore, by possibly interchanging the roles of $I$ and~$J$, we can assume that (1) the set $I\setminus J$ contains $a\geq 0$ elements with mates in $I\cap J$, and $k\geq 0$ pairs completely inside; (2) the set $J\setminus I$ contains $b\geq 0$ elements with mates in $S\setminus (I \cup J)$ and $\ell \geq 0$ pairs completely inside; (3) the set $I \cap J$ contains $c\geq 0$ pairs completely inside; and (4) no pair other than the ones described before intersects $I\cup J$.

Under these assumptions, let $(u_1,u'_1),\dots,(u_k,u'_k)$ and $(v_1,v'_1),\ldots,(v_\ell,v'_\ell)$ be the pairs completely inside $I\setminus J$ and $J\setminus I$ respectively. Note that $2k + a = 2\ell + b=|I\setminus J|=|J\setminus I|$. In particular, if $b > a$, then  $k > \ell$ and viceversa. We proceed with the rest of the cases. For visual aid, see Figure~\ref{fig:figure3}.

\begin{figure}[h!!t]
\centering
\scalebox{0.98}{\input{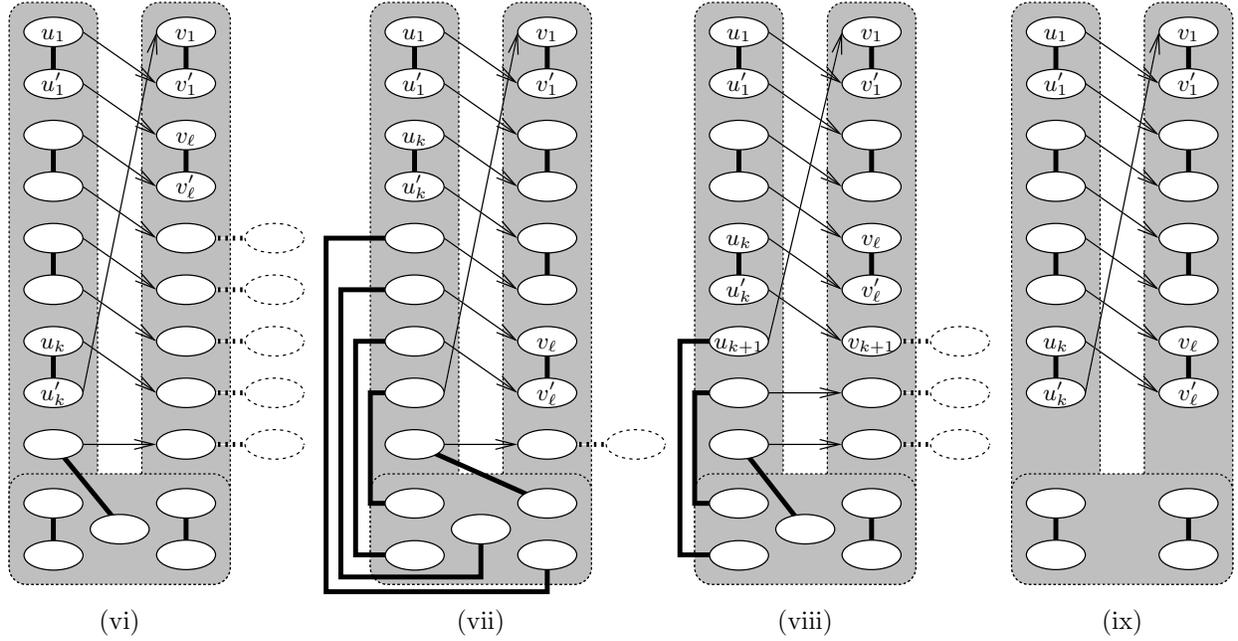}}
\caption{Cases (vi)--(ix), corresponding to $(a,b,\nu)=(1,5,7);(5,1,7);(3,3,7);(0,0,6)$. In each case, the top-left region is $I\setminus J$, the top-right region is $J\setminus I$ and the bottom region is $I\cap J$.
Thick lines represent $E$.  Thin arrows represent a possible map $\pi\colon I \to J$. The elements of $I\setminus J$ and $J\setminus I$ are numbered from up to down as $u_1,u'_1,u_2,u'_2,\dots$ and $v_1,v'_1,v_2,v'_2,\ldots$, respectively.}
\label{fig:figure3}
\end{figure}

\begin{enumerate}
\item[(vi)] If $a < b$, let $\{v_{i},v'_{i}\colon i = \ell+1,\dots,k\}$ be a set of $2(k-\ell)=b-a$ different elements of $J\setminus I$ with mates in $S\setminus (I\cup J)$. Set $\pi(u_i)=v'_i$ and $\pi(u'_i)=v_{i+1\!\pmod k}$ for every $1\leq i \leq k$.
\end{enumerate}
Assume that $L$ is the union of $\nu$ pairs. Then no element with mate in $S\setminus (I \cap J)$ is in $L$. Using Property 2, the preimages under $\pi$ of these elements are in~$L$. In particular, $u_{\ell+1},\dots, u_k$ are in~$L$. By repeatedly using Properties~1~and~2, starting on element $u_k$, we conclude that all $u_1,u'_1,\ldots,u_k,u'_k$ are in $L$, and that their images, $v_1,v'_1,\ldots,v_k,v'_k$ are not in $L$. Furthermore, since the $a$~elements of $I$ with mates in $I\cap J$ are also in $L$, we conclude that $L=I$.

\begin{enumerate}
\item[(vii)] If $b < a$, let $\{u_{i},u'_{i}\colon i = k+1,\dots,\ell\}$ be a set of $2(\ell-k)=a-b$ different elements of $I\setminus J$ with mates in $I\cap J$. Set $\pi(u_i)=v'_i$ and $\pi(u'_i)=v_{i+1\!\pmod \ell}$ for every $1\leq i \leq \ell$.
\end{enumerate}
Assume that $L$ is the union of $\nu$ pairs. Then every element with mate in $I\cap J$ is in $L$. Using Property~2, the images of these elements are not in~$L$. This is, the elements $v'_{k+1}, v_{k+2}, v'_{k+2},\dots, v_\ell, v'_\ell, v_1$ are not in $L$. By repeatedly using Properties 1 and 2, starting on element $v_1$, we conclude that $L=I$.
\begin{enumerate}
\item[(viii)] If $b = a > 0$, let $u_{k+1}$ be an element in $I\setminus J$ with mate in $I\cap J$, and $v_{k+1}$ be an element in $J\setminus I$ with mate in $S\setminus(I\cup J)$. Set $\pi(u_i)=v'_i$, $\pi(u'_i)=v_{i+1}$ for every $1\leq i \leq k$ and $\pi(u_{k+1})=v_1$.
\end{enumerate}
If $L$ is the union of $\nu$ pairs, then every element with mate in $I\cap J$ is in $L$. In particular, $u_{k+1} \in L$ and, by Property 2, its image $v_1$ is outside $L$. By repeatedly using Properties 1 and 2, starting on element $v_{1}$ we get $L=I$. This proof also holds in the degenerated case where $k=\ell=0$.

\begin{enumerate}
  \item[(ix)] If $b=a=0$. Set $\pi(u_i)=v'_i$, $\pi(u'_i)=v_{i+1\!\pmod k}$ for every $1\leq i \leq k$.
\end{enumerate}
For this final case, let $L$ be a union of $\nu$ pairs. If $u_1 \in L$ then it is easy to see, using Properties 1 and 2, that $L=I$. Similarly, if $u_1 \not\in L$ then $L=J$.

We have proved that for every case, and for every $K\subseteq I$, if the set $L=\pi(K) \cup (I \setminus K)$ is a union of pairs then $L\in \{I,J\}$. Thus, $L$ is always a base of $\M_G$, and therefore $\M_G$ is an strongly base orderable matroid. \qedhere
\end{proof}

\paragraph{Acknowledgment} The author would like to thank Jan Vondr\'ak for helpful discussions.

\bibliographystyle{abbrv}
\bibliography{biblioMatroidMatching}

\end{document}